\documentclass[11pt]{article}
\usepackage{amsmath,amssymb,amsthm, latexsym,color,epsfig,enumerate,a4, graphicx}

\oddsidemargin  0pt
\evensidemargin 0pt
\marginparwidth 40pt
\marginparsep 10pt
\topmargin 0pt
\headsep 0pt

\textheight 8.75in
\textwidth 6.6in

\theoremstyle{plain}
\newtheorem{theorem}{Theorem}[section]
\newtheorem{lemma}[theorem]{Lemma}

\date{}
\title{\vspace{-0.7cm}Strong games played on random graphs}
\author{
Asaf Ferber\thanks{Department of Mathematics, Yale University and
Department of Mathematics, MIT. Emails: {\tt asaf.ferber@yale.edu,
ferbera@mit.edu}.} \and Pascal Pfister \thanks{Institute of
Theoretical Computer Science, ETH Z\"urich, 8092 Z\"urich,
Switzerland. Email: {\tt ppfister@student.ethz.ch}. } }

\begin{document}
\maketitle

\begin{abstract}
In a strong game played on the edge set of a graph $G$ there are two
players, Red and Blue, alternating turns in claiming previously
unclaimed edges of $G$ (with Red playing first). The winner is the
first one to claim all the edges of some target structure (such as a
clique $K_k$, a perfect matching, a Hamilton cycle, etc.). It is
well known that Red can always ensure at least a draw in any strong
game, but finding explicit winning strategies is a difficult and a
quite rare task.

We consider strong games played on the edge set of a random graph $G
\sim G(n,p)$ on $n$ vertices. We prove, for sufficiently large $n$
and a fixed constant $0 < p < 1$, that Red can w.h.p win the perfect
matching game on a random graph $G  \sim G(n,p)$.
\end{abstract}

\section{Introduction}
\emph{Strong games}, as a specific type of \emph{Positional games},
involve two players alternately claiming unoccupied elements of a
set $X$, which is referred to as the \emph{board} of the game. The
two players are called \emph{Red} (the first player) and \emph
{Blue} (the second player). The focus of Red's and Blue's attention
is a given family $\mathcal{H} \subseteq 2^X$ of subsets of $X$,
called the \emph{hypergraph of the game}, or sometimes referred to
as the \emph{winning sets of the game}. The course of the game is
that Red and Blue take turns in claiming previously unclaimed
elements of $X$, \emph{exactly one} element each time, with Red
starting the game. The winner of such a strong game $(X,
\mathcal{H})$ is the first player to claim all elements of some
winning set $F \in \mathcal{H}$. If this has not happened until the
end of the game, i.e. until all elements of $X$ have been claimed by
either Red or Blue, the game is declared as a \emph{draw}.

One classical example of a strong game is the child game Tic-Tac-Toe and its close relative $n$-in-a-row, where the target sets are horizontal, vertical and diagonal lines of a square grid. The player completing a whole line first wins. If at the end of the game no line has been completely claimed by either of the players, then the game is declared as a draw.

Another interesting example is the following generalization of Tic-Tac-Toe -- the $[n]^d$ game. Here, the board is the $d$-dimensional discrete cube $X=[n]^d$, and the winning sets are all the \emph{combinatorial lines} in $X$. Note that, in this notation, the $[3]^2$ game is the familiar Tic-Tac-Toe.

It is also natural to play Positional games on the edge set of a graph $G=(V,E)$. In this case, $X=E$ and the target sets are all the edge sets of subgraphs of $G$ which possess some given graph property $\mathcal P$, such as ``being connected", ``containing a perfect matching", ``admitting a Hamilton cycle", ``being not $k$-colorable", ``containing an isomorphic copy of given graph $H$" etc.

Since a strong game is a finite, perfect information game (as all of the Positional games), a well known fact from Game Theory asserts that, assuming the two players play according to their optimal strategies, the game outcome is determined and it can be in principle: win of Red, win of Blue, or a draw.

In reality however, there are only two possible outcomes for this kind of games (assuming optimal strategies). Applying the so-called \emph{strategy stealing principle}, which was observed by John Nash in 1949, it follows that the first player (Red) cannot lose the game, if she plays according to her optimal
strategy. Hence any strong game, if Red and Blue play according to their optimal strategies, is either \emph{Red's win} or ends in a draw. On one hand, this argument sounds (and indeed is) very general and powerful, but on the other hand, the strategy stealing argument is very inexplicit and gives no clue for how such an optimal strategy for Red looks like.

Another general tool in the theory of strong games are Ramsey-type arguments. They assert that if a hypergraph $\mathcal{H} \subseteq 2^X$ is non-2-colorable (that is, in every coloring of the elements of the board $X$ with two colors, there must exists a monochromatic $F\in \mathcal{H}$), then Red has a winning strategy in the strong
game $(X, \mathcal{H})$. The most striking example of an application of
this method is probably for the above mentioned $[n]^d$ game. Hales and Jewett, in one
of the cornerstone papers of modern Ramsey theory \cite{HJ}, proved that for a given $n$ and a large enough $d\ge d_0(n)$, every 2-coloring of $[n]^d$ contains a monochromatic combinatorial line. Thus, the strong game played on such a board cannot end in a draw and is hence Red's win (but again, no clue how a winning strategy looks
like!).

Regretfully, the above two main tools (strategy stealing, Ramsey-type arguments) exhaust our set of general tools available to handle strong games. In addition, both tools are inexplicit, and Ramsey-type statements frequently provide astronomic bounds. The inherent difficulty in analysing strong games
can be explained partially by the fact that they are not hypergraph monotone. By
this we mean the existence of examples, e.g. provided by J\'ozsef Beck, (Ch. 9.4
of \cite{jozsef2009inevitable}), of game hypergraphs $\mathcal{H}$ which are Red's win, yet one can add an extra set $A$ to $\mathcal{H}$ to obtain a new hypergraph $\mathcal{H}'$ which is a draw. This is what Beck calls the \emph{extra set paradox}, and it is indeed quite disturbing.

Partly due to the great difficulty of studying strong games, \emph{weak games}, also known as \emph{Maker-Breaker games,} were introduced. In the \emph{Maker-Breaker game} $(X,\mathcal{H})$, two players, called Maker and Breaker, take turns in claiming previously unclaimed elements of $X$, with Breaker going first. Each player claims exactly one element of $X$ per turn. Again, the set $X$ is called the \emph{board} of the game and the members of $\mathcal{H}$ are referred to as the \emph{winning sets}. Maker wins the game as soon as she occupies all elements of some winning set $F\in \mathcal{H}$. If Maker does not fully occupy any winning set by the time every board element is claimed by some player, then Breaker wins the game. Note that being the first player is never a disadvantage in a Maker-Breaker game (see e.g. \cite{beck2008combinatorial}). Hence, in order to prove that Maker can win some Maker-Breaker game as the first or second player, it suffices to prove that she can win this game as the second player.

Using fast strategies for Maker-Breaker games (see
\cite{hefetz2009fast}), some very nice and surprising results about
particular strong games played on the edge set of a complete graph
$K_n$ have been obtained recently. The few examples of such strong
games, for which an explicit winning strategy, based on a fast
Maker-Breaker strategy, is known, include the \emph{perfect
matching} game, the \emph{Hamilton cycle} game and the
\emph{k-vertex-connectivity} game (see \cite{ferber2011winning},
\cite{ferber2014weak}), where Red's aim is to build a perfect
matching, a Hamilton cycle, and a $k$-vertex-connected spanning
subgraph of a complete graph $K_n$, respectively.

Since the problem of finding explicit winning strategies for Red is
quite hard, and since there are no general tools for it, it is just
natural to continue exploring such strategies on different type of
boards. Hopefully, at some point a general tool will appear. A very
natural candidate board for playing on is the well known binomial
random graph $G\sim G(n,p)$, where each edge of the complete graph
$K_n$ is being kept with probability $p$, independently at random
(for a very good survey on random graphs the reader is referred to
the excellent book \cite{bollobas2001random}).

In this paper we initiate the study of strong games played on the
edge set of a typical $G\sim G(n,p)$.  In particular, we analyze the
\emph{perfect matching} game played on $G$ and provide Red with a
winning strategy. Here is our main result:

\begin{theorem} \label{thm:perfect_matching}
Let $0<p\leq 1$ be a fixed constant. Then, a graph $G \sim G(n,p)$
is w.h.p such that Red has a winning strategy for the perfect
matching game played on $E(G)$.
\end{theorem}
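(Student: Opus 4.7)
The plan is to transfer the strategy used in \cite{ferber2011winning} for the perfect matching strong game on $K_n$ to the random-graph setting. That strategy rests on two ingredients: a \emph{fast} Maker--Breaker perfect matching strategy that builds a perfect matching in $n/2 + O(1)$ moves, together with a \emph{blocking} sub-strategy that prevents Blue from completing a perfect matching before Red does. The reason this can be hoped to work for $G \sim G(n,p)$ with $p$ constant is that such a graph is pseudo-random, and in particular is game-theoretically close enough to a scaled-down copy of $K_n$ that the $K_n$ arguments can be reused essentially verbatim.

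First I would collect the typical pseudo-random properties of $G$. Standard Chernoff and union-bound arguments show that w.h.p.\ $G$ has minimum degree $(1-o(1))pn$, every pair of disjoint linear-sized vertex sets $A,B$ spans $(1\pm\varepsilon)p|A||B|$ edges, and---most importantly---$G$ is \emph{robustly} matchable in the sense that after deleting any $o(n)$ edges at every vertex, it still contains a perfect matching, and more generally every large enough bipartite induced subgraph of the residual graph satisfies Hall's condition. These properties are exactly what the arguments of \cite{hefetz2009fast,ferber2011winning} for $K_n$ use, so they should go through on $G$ with the same qualitative conclusions.

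Given these properties, Red's strategy proceeds in two interleaved parts. In the main part, Red runs a fast Maker--Breaker perfect matching strategy on $G$, treating Blue as Breaker and incrementally extending her current partial matching using the expansion of the unclaimed subgraph of $G$; adapting the $K_n$ strategy of \cite{hefetz2009fast} to this pseudo-random host, Red finishes her matching in $n/2 + c$ moves for some absolute constant $c$. In the auxiliary part, Red monitors Blue's graph and, whenever Blue is one move away from completing a perfect matching using only edges of $G$ not yet claimed by Red, Red claims a single edge whose removal kills every such completion; the existence of this blocking edge follows from Hall's theorem applied to the pseudo-random bipartite pieces from Step~1. Because Red spends only $O(1)$ moves on blocking, the total move count of her combined strategy is still $n/2 + O(1)$, and since the blocking sub-strategy keeps Blue from ever owning a full perfect matching, Red wins the strong game.

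The hardest part will be the endgame analysis. Once Red's matching has size $n/2 - O(1)$, the set of uncovered vertices is small and Blue may well have concentrated her edges there to stall Red's completion while trying to race toward her own perfect matching. I would handle this by having Red reserve, during the very first moves of the game, a small linear-sized matching $M^{\ast}$ on a random ``reserve'' vertex set; by the robustness property, $M^{\ast}$ survives Blue's adversarial play, and in the final moves Red finishes her matching using short augmenting paths through $M^{\ast}$, whose existence is guaranteed by the edge-distribution property. Checking that this reservation is compatible with the blocking sub-strategy---and in particular that Red never has to both block and extend in the same move during the endgame---is the technical heart of the argument, and is where the constant $0<p\le 1$ assumption is really needed to give enough slack.
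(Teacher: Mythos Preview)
Your proposal has a genuine gap: the ``blocking sub-strategy'' does not work as stated, and without it nothing stops Blue from finishing first. You assert that whenever Blue is one move from a perfect matching, a single edge kills all completions, and that this happens only $O(1)$ times. Neither claim is justified. If Blue's graph has a matching of size $n/2-1$ with exposed vertices $u,v$, there may be many vertex-disjoint free augmenting $u$--$v$ paths in $G$, so no single edge blocks her; and a fast Maker--Breaker strategy does nothing to prevent Blue from reaching such a position repeatedly. The strong-game proof on $K_n$ in \cite{ferber2011winning} does \emph{not} proceed by blocking; it proceeds by a wasted-move invariant (keeping the number of vertices touched by Blue but not by Red at most one), which guarantees that whenever Red needs extra moves in the endgame, Blue has already wasted at least as many. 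Your outline has no analogue of this invariant. The ``reserve matching $M^{\ast}$'' endgame is similarly problematic on $G$: augmenting a specific pair $x,y$ through $M^{\ast}$ requires \emph{designated} edges $xu,yv$ with $uv\in M^{\ast}$ to be present in $G$ and free, and pseudo-randomness controls edge counts between linear-sized sets, not the presence of named edges among $O(1)$ vertices.

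For comparison, the paper abandons the global pseudo-random approach entirely. It shows (Lemma~\ref{partitioning_lemma}) that w.h.p.\ $V(G)$ partitions into cliques $V_1,\dots,V_t$ of even size $\Theta(\log^{1/3}n)$, cyclically ordered so that each $G[V_i\cup V_{i+1}]$ is again a clique. Red then plays locally on each $G[V_i]$ in turn, using the ``almost strong'' $K_m$ strategy $\mathcal S^{a.strong}_m$ of Lemma~\ref{lemma:strategy_strong}, and maintains the wasted-move invariant subboard by subboard. The endgame swaps (Stage~III of $\mathcal S^{a.strong}_m$) happen inside a clique, where the required edges exist by construction, and the clique on $V_i\cup V_{i+1}$ lets Red import two vertices from the next subboard when Blue blocks her last edge on the current one. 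The constant-$p$ hypothesis is used only to obtain these polylogarithmic cliques, not for any edge-distribution estimate of the kind you rely on.
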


\subsection{Notation and terminology}

Our graph-theoretic notation is standard and follows that of \cite{west2001introduction}. In particular, we use the following. For a graph $G$, let $V(G)$ and $E(G)$ denote its sets of vertices and edges respectively. Moreover, let  $e(G) := |E|$ be the number of edges of $G$ and, for any two disjonit subset $S,T \subset V(G)$ let $e(S,T)$ be the number of edges with one endpoint in $S$ and the other in $T$. For a set $S\subseteq V(G)$, let $G[S]$ denote the subgraph of $G$, induced on the vertices of S.

Assume that some strong game, played on the edge set of some graph
$G$, is in progress. At any given moment during this game, we denote
the graph spanned by Red's edges by $R$, and the graph spanned by
Blue's edges by $B$. For a set $S\subseteq V(G)$, let $B[S]$,
respectively $R[S]$, denote the subgraph of $B$, respectively of
$R$, induced by the vertices of $S$. At any point during the game,
the vertices of $G\setminus(R\cup B)$ are called \emph{free
vertices} and any edge not yet claimed is called \emph{free edge}.
We also denote by $d_{R}(v)$ and $d_{B}(v)$ the degree of a given
vertex $v\in V(G)$ in $R$ and in $B$ respectively. Moreover, any
vertex $v \in V$ with $d_R(v)=0$ and $d_B(v) >0$ is called
\emph{distinct}.

\section{Preliminaries and tools}

In this section we introduce some tools used in the proof of Theorem \ref{thm:perfect_matching}.

\subsection{Partitioning of $G \sim G(n,p)$}\label{sec:partition}

We first show the following auxiliary theorem and a partitioning lemma for a graph $G \sim G(n,p)$ which will allow Red to partition the board $E(G)$ into suitable subboards.

\begin{theorem} \label{thm:auxiliary}
Let $n$ be a sufficiently large integer and let $0<p\leq 1$ be a fixed constant. Then, w.h.p, a graph $G \sim G(n,p)$ is such that the following holds:\\
There exists a partition $V(G)=V_1 \cup ... \cup V_t$ of $G$ into disjoint subsets such that for all $1\leq i \leq t$ we have:
\begin{description}
\item[$i)$] $G[V_i]$ is a clique
\item[$ii)$] $|V_i| = \Theta(\ln^{\frac{1}{3}}n)$
\item[$iii)$] $|V_i|$ is even for all $1 \leq i \leq t-1$
\end{description}
\end{theorem}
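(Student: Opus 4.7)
Fix an even integer $k = k(n)$ of order $\Theta(\ln^{1/3} n)$ (say $k = 2\lfloor \tfrac12 \ln^{1/3} n\rfloor$). Write $n = qk + r$ with $0 \le r < k$ and set $k_t := k + r$ if $r > 0$ and $k_t := k$ otherwise, so that $k_t \in [k, 2k)$ is again $\Theta(\ln^{1/3} n)$ and $k$ divides $n - k_t$. The idea is to build the partition in two phases: (a) pick one distinguished clique $V_t$ of size exactly $k_t$ at the outset so that the rest has size divisible by $k$; then (b) partition the remaining $qk$ vertices into $q$ vertex-disjoint cliques of size $k$, which will play the role of $V_1, \dots, V_{t-1}$. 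Since $k$ is even and of the correct order, conditions (i)--(iii) then hold automatically.

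Phase (a) is easy. A first- and second-moment computation for the number $X_{k_t}$ of $k_t$-cliques in $G(n,p)$---whose mean $\binom{n}{k_t} p^{\binom{k_t}{2}}$ is super-polynomial in $n$, since $k_t = o(\log n)$ while $\omega(G(n,p)) = (2+o(1))\log_{1/p} n$ w.h.p.---yields $X_{k_t} > 0$ w.h.p.\ and I let $V_t$ be any such clique. Importantly, the event ``$V_t$ is a clique'' constrains only the $\binom{k_t}{2}$ edges inside $V_t$, which are independent of the remaining edges, so $G' := G - V_t$ has the distribution of $G(n-k_t, p)$ on $qk$ vertices. It therefore suffices to show that $G'$ w.h.p.\ contains a $K_k$-factor.

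For phase (b) I would run an iterative greedy argument combined with absorption. First, establish the following ``clique-richness'' property of $G(n,p)$: w.h.p., for every $S \subseteq V(G)$ with $|S| \ge N_0$ (some $N_0 \ll n$), the induced subgraph $G[S]$ contains a clique of size $k$. This follows from Janson's inequality applied to the number of $k$-cliques in a fixed $S$ (whose expectation $\binom{|S|}{k} p^{\binom{k}{2}}$ is super-polynomial in $n$ once $|S| \ge N_0$), together with a union bound over subsets $S$. Using this property, iteratively pull out $k$-cliques from the current remaining graph until fewer than $N_0$ vertices remain. To dispose of this leftover, reserve in advance a small absorbing structure $A \subseteq V(G')$ of size $o(n)$ with the property that for every legal leftover $L$, the union $A \cup L$ can be re-partitioned into $k$-cliques; standard random-graph computations show such an $A$ exists w.h.p.\ inside $G'$. (Alternatively one may quote the $K_k$-factor theorem of Johansson--Kahn--Vu, whose hypothesis $p \gg n^{-2/(k-1)}$ is trivially met here.)

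\textbf{Main obstacle.} The technically hardest step is the endgame of phase (b). A purely greedy extraction of $k$-cliques is forced to leave a residue of size roughly $N_0 = \exp(O(\ln^{1/3} n))$ vertices uncovered, and such a small residue need not itself be partitionable into $k$-cliques. Constructing an absorber inside the random graph---showing that $G'$ w.h.p.\ contains enough ``absorbing gadgets'' to swallow every possible leftover---is the delicate point; the remaining ingredients (Janson, union bounds, concentration of $\omega$) are routine.
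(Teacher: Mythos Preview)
Your plan is a legitimate alternative route, but two of its steps are not justified as written. First, the independence claim in phase~(a) is false: once $V_t$ is chosen \emph{after} exposing $G$ (say, as the first $k_t$-clique you find), the remainder $G-V_t$ is \emph{not} distributed as $G(n-k_t,p)$, because the selection rule can leak information about edges outside $V_t$. This is repairable (two-round exposure, or simply fold the odd part into the main construction), but it is wrong as stated. Second, and more seriously, phase~(b) is incomplete. Quoting Johansson--Kahn--Vu is not a valid shortcut: that theorem is proved for a \emph{fixed} target graph $H$, whereas here $k=k(n)\to\infty$. Your greedy ingredient is fine---Janson plus a union bound does give the clique-richness property once $N_0$ exceeds roughly $p^{-k/2}=\exp(\Theta(\ln^{1/3}n))$---but afterwards you are left with up to $N_0$ uncovered vertices, and you do not construct the absorber. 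Building an absorbing set $A\subseteq V(G')$ that provably re-tiles $A\cup L$ into $K_k$'s for \emph{every} admissible leftover $L$, with $k$ growing, is a genuine lemma in its own right; you correctly flag it as the main obstacle and then leave it open. Until that lemma is supplied, the proposal is a sketch rather than a proof.

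For comparison, the paper avoids all of this with a short parallel construction (following Krivelevich--Patk\'os). It sets $t\approx n/\ln^{1/3}n$, splits $V(G)$ into $r\approx\ln^{1/3}n$ layers $U_1,\dots,U_r$ of size $t$, and grows $t$ cliques simultaneously: start from the singletons in $U_1$, and in round $i$ extend every current clique by one vertex of $U_i$ via a perfect matching in the auxiliary bipartite graph between $\{\text{current cliques}\}$ and $U_i$, where an edge means ``the new vertex is complete to that clique''. This bipartite graph has edge probability $p^{\,i-1}\ge p^{\Theta(\ln^{1/3}n)}$, comfortably above the perfect-matching threshold on $t$ vertices, so one w.h.p.\ event per round suffices; a union over $r=O(\ln^{1/3}n)$ rounds and two tailored final rounds to fix parity finish the argument. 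No union bound over subsets, no absorbers, no $K_k$-factor black box---and the parity requirement~(iii), which in your plan is handled by the somewhat awkward pre-extraction of $V_t$, falls out of the last two rounds for free.
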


The proof of this theorem closely follows a nice argument of Krivelevich and Patk\'os from the proof of Theorem 1.2 in \cite{krivelevich2009equitable}.

\begin{proof}
We will use the following greedy algorithm: Let $k = \lceil n / \ln^{\frac{1}{3}}n \rceil$ and partition the vertex set $V(G)=U_1 \cup ... \cup U_{\lfloor \frac{n}{k} \rfloor} \cup W$ such that $|U_1|=|U_2|=...= |U_{\lfloor \frac{n}{k} \rfloor}| = k$ and $W = V \setminus \bigcup_{j=1}^{\lfloor \frac{n}{k} \rfloor} U_j$. Note that $|W| \leq k$.

Let $r = \lceil \frac{n}{k} \rceil \leq \ln^{\frac{1}{3}}n$. We
build the $k$ cliques in $r$ rounds by starting with cliques of size
1 and, for all $2 \leq i \leq r-2$,  adding in the  $i^{\emph{th}}$
round one vertex of $U_i$ to each clique. In the last two rounds, we
add the last few vertices ``smartly" to ensure that all but one of
the cliques are of even size. We denote the $k$ cliques obtained
after the $i^{\emph{th}}$ round of the algorithm by
$C^1_i,...,C^k_i$. In the first $(r-2)$ rounds the algorithm works
as follows: In the first round we simply define $ \lbrace
C^1_1,...,C^k_1 \rbrace := U_1$, and hence $C^1_1,...,C^k_1$ is a
collection of $k$ cliques, each of which of size 1. For $2 \leq i
\leq r-2$, in the $i^{\emph{th}}$ round we expose all edges between
$U_i$ and $\bigcup_{j=1}^{i-1}U_j$. To find the extension of the
cliques, we define an auxiliary random bipartite graph $B_i = W_i
\cup U_i$ on $2k$ vertices, where $W_i = \lbrace
C^1_{i-1},...,C^k_{i-1} \rbrace$. That is, $W_i$ represents the
already formed cliques of size $(i-1)$ and the other part stands for
the new vertices we want to add to those cliques. For $C\in W_i$ and
$x\in U_i$, we add the edge $Cx$ to $E(B_i)$ if and only if $x$ is
connected (in $G$) to all the vertices of $C$. Hence, any perfect
matching of $B_i$ corresponds to an extension of the cliques
$C^1_{i-1},...,C^k_{i-1}$ by one vertex each. Note that the
auxiliary graph $B_i$ has edge probability $p_i= p^{i-1}$. It is
shown below that w.h.p there exists a perfect matching in $B_i$ for
all $1 \leq i \leq r-2$. Before that, let us describe the last two
rounds of the algorithm.

Assume that after $(r-2)$ rounds we have $k$ cliques $C^1_{r-2},...,C^k_{r-2}$ of size $(r-2)$. In the last two rounds, the algorithm extends these cliques with the vertices of $U_{\lfloor \frac{n}{k} \rfloor} \cup W$ such that all but at most one of the cliques $C^1_r,...,C^k_r$ are of even size. This is done as follows:\\
If $(r-2)$ is odd, we continue for one more round as described above. Hence we obtain $k$ cliques $C^1_{r-1},...,C^k_{r-1}$ of even size, and we define $L:=W$.\\
Else, if $(r-2)$ is even, we define $L:=U_{\lfloor \frac{n}{k} \rfloor} \cup W$ and update $\lbrace C^1_{r-1},...,C^k_{r-1} \rbrace:= \lbrace C^1_{r-2},...,C^k_{r-2} \rbrace$. Note that $|L| \leq 2k$.

In the last round, the algorithm first partitions $L = X \cup Y$ into two equitable halves (i.e. $||X| -|Y|| \leq 1$) and exposes all the edges in $L$. W.l.o.g. let $|X|\leq |Y|=k' \leq k$ and define the auxiliary random bipartite graph $B_L = X \cup Y$ with edge probability $p_L=p$ (note that we forget about all the edges exposed inside $X$ and $Y$). Let $(x_1,y_1),...,(x_{k'-1},y_{k'-1})$ be the vertices of $X$ and $Y$ which are paired up by a perfect matching in $B_L$. We define the auxiliary set $Z=\lbrace z_1,...,z_{k'} \rbrace$ by $z_i:= (x_i,y_i)$ for all $ 1 \leq i \leq k'-1$ and $z_{k'}=(x_{k'},y_{k'})$ if $|X|=|Y|$, respectively $z_{k'}=y_{k'}$ if $|X|=|Y|-1$.\\
Then, the algorithm exposes all edges between $L$ and $V \setminus L$. We define the auxiliary random bipartite graph $B_{r} = W_{r} \cup Z$ on $2k'$ vertices, with $W_{r} = \lbrace C^1_{r-1},...,C^{k'}_{r-1} \rbrace$. An edge between $z_i$ and $C^j_{r-1}$ is present in $E(B_{r})$ if and only if all edges between $x_i$ and $C^j_{r-1}$ as well as all edges between $y_i$ and $C^j_{r-1}$ are present. Hence we obtain an edge probability $p_{r} = p^{2(r-2)}$ (if $z_{k'} = y_{k'}$, we flip an additional coin with success-probability $q=p^{r-2}$ for the edges connecting $z_{k'}$ with $C^j_{r-1}$ to obtain $p_{r} = p^{2(r-2)}$ also for edges touching $z_{k'}$).\\
If there exists a perfect matching in $B_{r}$, the algorithm extends the cliques $C^1_{r-1},...,C^{k'}_{r-1}$ by the corresponding vertex-pair in $Z$ and therewith obtains $k$ cliques $C^1_r,...,C^k_r$ of which at most one, namely $C^{k'}_r$, is of odd size. Thus, after reordering, the algorithm outputs a partition $V(G)=V_1 \cup ... \cup V_k$ into $k$ disjoint subsets of size $\Theta(\ln^{\frac{1}{3}}n)$, such that $G[V_i]$ is a clique for all $i \in[k]$ and $|V_i|$ is even for all $1 \leq i \leq k-1$.
\medskip

It remains to prove that the algorithm succeeds w.h.p, i.e. that the algorithm can find a perfect matching in the auxiliary bipartite graphs $B_L$ and $B_i$ for all $1 \leq i \leq r$. Since $0<p<1$ is a constant, we have, for all $1 \leq i \leq r$, that $p_L \geq p_i \geq p^{2r} \geq p^{2\ln^{\frac{1}{3}}n} = O(n^{{\alpha}-1})$ for some $0< \alpha < 1$. By Remark 4.3 in Chapter 4 in \cite{janson2011random}, we know that the probability that there is no perfect matching in our auxiliary bipartite graphs $B_i$ is $O(ke^{-kp_i})$. Therewith, the probability that the Algorithm fails is upper bounded by $(r+1) O(ke^{-kp^{2r}}) = O(ne^{-n^{\alpha} / \ln^{\frac{1}{3}}n}) = o(1)$. Thus the algorithm succeeds with high probability and constructs $k$ cliques with the desired properties.
\end{proof}

Using the above theorem, we prove the following lemma which ensures
us a partitioning of a random graph $G \sim G(n,p)$ into disjoint
complete subgraphs which can be cyclicly ordered in such a way that
the union of any two consecutive cliques is a clique as well.

\begin{lemma} \label{partitioning_lemma}
Let $0<p\leq 1$ be a fixed constant. Then, w.h.p, a graph $G \sim G(n,p)$ is such that the following holds:\\
There exists a partition $V(G)=V_1 \cup ... \cup V_t$ of $G$ into disjoint subsets such that for all $1\leq i \leq t$ we have the following:
\begin{description}
\item[$i)$] $G[V_{i}\cup V_{i+1}]$ is a clique (we consider $t+1$ to be 1)
\item[$ii)$] $|V_i| = \Theta(\log^{\frac{1}{3}}n)$
\item[$iii)$] $|V_i|$ is even for all $1 \leq i \leq t-1$.
\end{description}
\end{lemma}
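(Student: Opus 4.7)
The plan is to mimic the parallel greedy algorithm of Theorem \ref{thm:auxiliary}, but with a strengthened extension rule that enforces the cyclic clique condition $(i)$ from the start. Namely, we build all $t$ target sets $V_1,\ldots,V_t$ simultaneously, and at every round we add one new vertex $x_i$ to each $V_i$ subject to two requirements: $x_i$ must be adjacent in $G$ to every vertex of the current $V_{i-1}\cup V_i\cup V_{i+1}$, and it must also be adjacent to the new vertices $x_{i-1}$ and $x_{i+1}$ chosen for the neighbouring positions. Maintaining both invariants throughout the construction automatically guarantees that at the end $G[V_i\cup V_{i+1}]$ is a clique for every $i$ (taken cyclically), giving property~$(i)$.

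Fix an even integer $m=\Theta(\log^{1/3}n)$, let $t=\lfloor n/m\rfloor$, and partition $V(G)$ into candidate pools $U_1,\ldots,U_m$ of size~$t$ together with a small remainder $W$. For round $j=1$ we must produce initial vertices $v_1,\ldots,v_t$ with $v_iv_{i+1}\in E(G)$ for every $i$ cyclically; since $G$ restricted to $U_1$ is distributed as $G(t,p)$ and is therefore w.h.p. Hamiltonian for constant~$p$, this is possible. For rounds $j\ge 2$, we process the positions in three sub-phases indexed by $i\bmod 3$: in each sub-phase we set up an auxiliary bipartite graph between the selected positions and the still unused candidates of $U_j$, connecting position $i$ to candidate $v$ iff $v$ is adjacent in $G$ to every vertex of $V_{i-1}^{(j-1)}\cup V_i^{(j-1)}\cup V_{i+1}^{(j-1)}$ and to every $x_{i\pm 1}$ already placed in an earlier sub-phase of the same round. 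A perfect matching in this auxiliary graph then yields the desired extension.

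The main technical obstacle is the coupling between positions. For a fixed candidate $v$, the events ``$v$ works at position $i$'' for different $i$ are in general correlated, because the windows $V_{i-1}\cup V_i\cup V_{i+1}$ overlap whenever $|i-i'|\le 2$. The three sub-phases are designed precisely to resolve this: within a sub-phase, consecutive selected positions differ by at least three, so their windows are pairwise disjoint, and so are the small sets $\{x_{i-1},x_{i+1}\}$ coming from earlier sub-phases. Thus the auxiliary bipartite graph in each sub-phase has genuinely independent edges, with edge probability at least $p^{O(m)}=n^{-o(1)}$. Since $t\cdot p^{O(m)}\gg\log t$, Remark~4.3 of \cite{janson2011random} provides the desired perfect matching w.h.p., and a union bound over the $3(m-2)$ sub-phases yields total failure probability $o(1)$.

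The last two rounds are handled exactly as in the proof of Theorem \ref{thm:auxiliary}: the leftover set $W$ is absorbed into the cliques by pairing its vertices via a perfect matching in a random bipartite graph (flipping an additional $p^{O(m)}$-biased coin whenever needed), so that all but one of the $V_i$ end up with even size, proving property~$(iii)$. Since $|V_i|=m=\Theta(\log^{1/3}n)$ throughout, property~$(ii)$ is automatic, completing the proof.
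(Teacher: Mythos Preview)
Your direct-construction approach is quite different from the paper's proof, which instead uses a two-round exposure (sprinkling). The paper writes $G=G_1\cup G_2$ with $G_1,G_2\sim G(n,q)$ where $1-p=(1-q)^2$, applies Theorem~\ref{thm:auxiliary} to $G_1$ to obtain cliques $U_1,\ldots,U_l$, splits each $U_i=L_i\cup R_i$ into two even halves, and then defines an auxiliary random \emph{digraph} $D$ on the pairs $(L_i,R_i)$ with a directed edge $R_iL_j$ iff $R_i$ is complete to $L_j$ in $G_2$. Since the $G_2$-edges are fresh, $D$ has independent arcs of probability $q^{\Theta(\log^{2/3}n)}\gg (\ln l)/l$, so by~\cite{frieze1988algorithm} it contains a directed Hamilton cycle; reading off the cycle gives the cyclic order $L_{\sigma(1)},R_{\sigma(1)},L_{\sigma(2)},R_{\sigma(2)},\ldots$, and consecutive unions are cliques either because they sit inside one $U_i$ (in $G_1$) or because of a Hamilton-cycle arc (in $G_2$). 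This completely sidesteps the coupling between neighbouring cliques that your scheme has to manage round by round.

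Your three-sub-phase idea is sound \emph{within} a single sub-phase, but the independence claim fails across sub-phases as written. In sub-phase~$0$ you expose, for every candidate $v\in U_j$, all edges from $v$ to the windows of residue-$0$ positions, and those windows together cover \emph{all} of $\bigcup_k V_k^{(j-1)}$. Hence when the surviving (``still unused'') candidates enter sub-phase~$1$, the event ``$v$ is adjacent to the window of position $i$'' is already determined; conditioned on the history, the bipartite edge $(i,v)$ is Bernoulli$(p)$ or identically~$0$, not Bernoulli$(p^{O(m)})$, and which candidates survive is itself correlated with these window tests. Remark~4.3 of~\cite{janson2011random} therefore does not apply directly. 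The fix is easy---pre-split $U_j$ into three equal pools and use a fresh pool per sub-phase, so that no edge is revealed twice---but your proof does not do this. A similar remark applies to the parity-fixing endgame: the pair-absorption trick from Theorem~\ref{thm:auxiliary} must now make each absorbed pair complete to \emph{three} consecutive cliques, not one, and the sentence ``handled exactly as in the proof of Theorem~\ref{thm:auxiliary}'' does not cover that.
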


\begin{proof}
Let $q$ be a constant such that $1-p=(1-q)^2$. Present $G=G_1 \cup G_2$, where $G_1,G_2 \sim G(n,q)$ (for details see \cite{bollobas2001random}).\\
Let $V(G_1)=U_1 \cup ... \cup U_l$ be the partition of $G_1$ into disjoint subsets obtained by applying Theorem \ref{thm:auxiliary}. Hence we have that $|U_i|$ is even for all $1 \leq i \leq l-1$.\\
Further, partition each subset $U_i=L_i \cup R_i$ into two halves such that
\begin{description}
\item[$i)$] $||L_i|-|R_i|| \leq 2$ for all $1 \leq i \leq l$
\item[$ii)$] $|L_i|$ and $|R_i|$ are even for all $1 \leq i \leq l-1$
\item[$iii)$]  $|L_l|$ is even.
\end{description}
Hence, all subsets but $R_l$ are of even size and, for all $1 \leq i \leq l$, we have that $L_i$ and $R_i$  are of size  $\Theta(\log^{\frac{1}{3}}n)$.\\
Before exposing $G_2$, define an auxiliary digraph $D=(V,E)$ such that the set of vertices is defined by $V(D) := \lbrace (L_i,R_i) | 1 \leq i \leq l \rbrace$. Furthermore, let $R_iL_j$ be the directed edge from $(L_i,R_i)$ to $(L_j,R_j)$, which is present if and only if all edges between $R_i$ and $L_j$ appear in $G_2$.\\
Note that $\Pr[R_iL_j \in E(D)]=q^{|R_i||L_j|} = q^{\Theta(\log^{\frac{2}{3}}n)} = \omega(\frac{\ln^2 n}{n}) = \omega(\frac{\ln(|V(D)|)}{|V(D)|})$, since $|V(D)|\approx n / \ln^{\frac{1}{3}}n$. Using the main result of \cite{frieze1988algorithm}, we know that the digraph $D$ contains a directed Hamilton cycle. W.l.o.g. let $(L_1,R_1)(L_2,R_2)...(L_l,R_l)$ be this directed Hamilton cycle. By defining $V_1, V_2,...,V_t := U_{1,1},U_{1,2},U_{2,1},U_{2,2},U_{3,1},...,U_{l,2}$ we hence obtain our partition $V(G) = V_1 \cup ... \cup V_t$ with the desired properties.
\end{proof}

\subsection{The perfect matching game on the complete graph $K_n$}

The main tools used in the proof of Theorem
\ref{thm:perfect_matching} are the following two strategies
concerning the perfect matching game on the complete graph $K_n$.
One is the strategy described in the proof of Theroem 1.2 in
\cite{hefetz2009fast} which ensures that Maker can win the weak
perfect matching game on $K_n$ in at most $n/2+1$. We will
henceforth denote this strategy by $\mathcal{S}^{weak}_n$. The
second strategy is a slight alteration of the strategy which ensures
that Red can win the strong perfect matching game on $K_n$, as
described in the proof of Theorem 1.3 in \cite{ferber2011winning}.
Before describing this strategy we need the following definitions.
For any matching $M \in G$, let $e(M)$ be the number of edges in this matching and let $M_G := \max \lbrace e(M) : M \subset G \emph{ is a matching in G} \rbrace$ denote the size of a maximum matching in G. When a strong perfect matching game is in progress, we say that Blue (respectively Red) \emph{wastes} a move, if she claims an edge which does not increase $M_B$ (respectively $M_R$). Note that the game we propose below can be thought of as an ``almost strong" perfect matching game, because Red's strategy gives her a perfect matching fast (in at most $n/2 +2$ moves), without wasting more moves than Blue. But since Blue may have already claimed an edge on the board before Red starts to play, the strategy cannot assure that Red builds a perfect matching before Blue does (hence the ``almost strong"). Now we are ready to state and prove the following:
\begin{lemma} \label{lemma:strategy_strong}
Let $H = K_n$ and let $G \supseteq H$ be a graph on $n' \geq n$ vertices. Assume that, when Red starts claiming edges, there exists a vertex $v \in V(H)$ with $d_B(v) \geq 1$, but Blue claimed at most one edge $xy$ in $E(H)$. Then Red can build a perfect matching on $H$ in at most $n/2+2$ moves. Moreover, Red will not waste more moves than Blue.
\end{lemma}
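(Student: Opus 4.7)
The plan is for Red to run the fast weak perfect matching strategy $\mathcal{S}^{weak}_n$ on $H$, adjusted by a single ``neutralizing'' opening move that absorbs Blue's potential pre-existing edge $xy$.

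If Blue has claimed the edge $xy \in E(H)$ before Red's first move, Red would begin by selecting a vertex $z \in V(H)\setminus\{x,y\}$ of Blue-degree zero inside $H$ (such a $z$ exists since $n$ is large and Blue has claimed at most one edge of $H$) and claiming $xz$. This puts $\{x,z\}$ into $M_R$ and renders $x$ unavailable as an endpoint for any Blue matching extension of $xy$. From that point on, Red would play $\mathcal{S}^{weak}_n$ on the clique induced by $V(H)\setminus\{x,z\}$, regarding Blue as Breaker. By Theorem 1.2 in \cite{hefetz2009fast} this completes a perfect matching on $V(H)\setminus\{x,z\}$ in at most $(n-2)/2+1$ further Red moves, giving a perfect matching of $H$ in at most $n/2+2$ moves. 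If Blue has claimed no edge of $H$ at the start, Red simply runs $\mathcal{S}^{weak}_n$ directly on $H$, finishing in $n/2+1 \leq n/2+2$ moves.

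For the ``no more wasted moves than Blue'' clause, the plan is to use the following observation: in $\mathcal{S}^{weak}_n$ Maker wastes a move only when Breaker's previous move has just destroyed the matching extension she had intended to play, and a direct inspection of the strategy shows that every such destructive Blue move inside $H$ simultaneously fails to extend $M_B$, hence also counts as a Blue waste. In addition, any Blue move lying outside $E(H)$ automatically counts as a Blue waste from the point of view of the sub-game on $H$. The hypothesis $d_B(v)\geq 1$ together with Blue having at most one edge of $H$ already forces at least one such out-of-$H$ waste by Blue before Red starts playing, providing the single unit of slack that accommodates Red's neutralizing move. A straightforward induction then yields that, at every point during the game, Red's wasted-move count on $H$ is at most Blue's.

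The hardest part will be the detailed case check that Blue cannot force Red to waste an extra move by cleverly mixing in-$H$ and out-of-$H$ play; this requires tracking, at each step, which of Red's potential matching extensions have been blocked and pairing every such block with a Blue waste. The analysis mirrors the bookkeeping carried out in the proof of Theorem 1.3 in \cite{ferber2011winning}, with the minor modifications above to accommodate Blue's pre-existing edge in $H$ and the ambient graph $G\supseteq H$; these modifications should not alter the structure of the argument.
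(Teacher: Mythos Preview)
The proposal has a genuine gap in the ``wasted moves'' clause.  Your argument rests on two claims that do not hold.

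First, you assert that ``any Blue move lying outside $E(H)$ automatically counts as a Blue waste'' and that the hypothesis $d_B(v)\ge 1$ ``forces at least one such out-of-$H$ waste by Blue before Red starts playing''.  But the lemma's notion of waste is global: Blue wastes a move only when the claimed edge fails to increase $M_B$, the size of a maximum matching in the whole Blue graph.  Blue's very first edge (whether inside $E(H)$ or not) takes $M_B$ from $0$ to $1$ and is \emph{never} a waste, and subsequent Blue moves outside $E(H)$ can keep extending $M_B$ indefinitely.  The hypothesis therefore supplies no free unit of slack.  Second, the claim that ``every such destructive Blue move inside $H$ simultaneously fails to extend $M_B$'' is false in general: Blue can claim exactly the edge Red intended, with both endpoints previously Blue-isolated, thereby blocking Red while still increasing $M_B$.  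Consequently, in your Case~2 (no Blue edge in $E(H)$), running $\mathcal{S}^{weak}_n$ blindly may cost Red one wasted move while Blue wastes none; the same can occur in your Case~1 after the neutralizing move.

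The paper's proof avoids this by an essentially different mechanism.  Rather than appealing to $\mathcal{S}^{weak}$ directly, it maintains throughout Stage~II a \emph{trap vertex} (an $H$-distinct vertex $u$ with $d_B(u)\ge 1$ and $d_R(u)=0$), proving via a separate induction (Lemma~\ref{lemma:distinct}) that the number of $H$-distinct vertices can be kept at most~$1$ after every Red move.  The endgame (Stage~III) is then arranged so that the last edges Red needs are incident to vertices Blue has already touched; only with this setup is each Blue block \emph{forced} to be a waste, which is what balances Red's extra moves.  Your opening move in Case~1 coincides with the paper's Stage~I, but the subsequent blind use of $\mathcal{S}^{weak}$ discards precisely the bookkeeping that makes the waste-count comparison go through.
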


The proof of Lemma \ref{lemma:strategy_strong} more or less follows
the lines of Theorem 1.3 in \cite{ferber2011winning}.

\begin{proof} Red's goal in this ``almost strong" perfect matching game is to build a perfect matching on a complete subgraph $H \subseteq G$ in at most $n/2 +2$ moves while not wasting more moves than Blue does.
In what follows, we present a strategy for Red and then prove that,
by following it, Red can build a perfect matching on $H$ in at most
$n/2+2$ moves while not wasting more moves than Blue.
\medskip

Assume first that $n$ is odd. Following Maker’s strategy $\mathcal{S}^{weak}_n$ on $E(H)$, Red can build an almost perfect matching on $H$ in $\lfloor n/2 \rfloor$ moves. Hence, if $n$ is odd, then Red plays according to $\mathcal{S}^{weak}_n$ on $E(H)$.
\medskip

Else, Red's strategy is divided into the following three stages:
\medskip

\textbf{Stage I:} In her first move, depending on whether Blue
claimed an edge in $E(H)$ or not, Red distinguishes between the
following two cases:
\begin{description}
\item[Case 1:] Blue claimed an edge $xy \in E(H)$.\\
Red then claims a free edge $xz$ for some arbitrary $z \neq y \in
V(H)$, defines the set  $U:=V(H)$ and skips to Stage II.
\item[Case 2:] $e(B[H]) = 0$, but there exists a vertex $u \in H$ with $d_B(u) \geq 1$.\\
Red defines the set $U:=V(H) \setminus \lbrace u \rbrace$, claims an arbitrary free edge in $E(G[U])$ and skips to Stage II.
\end{description}

To describe Stage II, we define for all vertices $v \in V(H)$ the \emph{H-degree} $d_B^H (v)$ of Blue, respectively $d_R^{H}(w)$ of Red, as the number of edges connecting $v$ to other vertices $w \in V(H)$ in Blue's graph, respectively in Red's graph. Moreover, a vertex $v \in V(H)$ is called \emph{H-distinct}, if $d_B^{H}(v) \geq 1$ and $d_R^{H}(v)=0$. Hence let $D_j$ be the number of H-distinct vertices immediately \emph{after} Red's $j^{\emph{th}}$ move. Additionally, let ${D'}_j$ be the number of H-distinct vertices immediately \emph{before} Red's $j^{\emph{th}}$ move.
\medskip

\textbf{Stage II:} For every $2 \leq j \leq n/4+2$, in her $j^{\emph{th}}$ move Red claims an edge $e_j \in E(G[U])$ which is independent of her previously claimed edges while making sure that $D_j \leq 1$. Red can even ensure that, if $D_k = 1$ for some $1 \leq k \leq n/2-1$, then $D_j = 1$ for all $k \leq j \leq n/2-1$ (we will prove later that this is indeed possible). Hence, let $2 \leq k \leq n/2-1$ be the smallest integer such that $D_k = 1$. Then Red updates $U := H$ in her $k^{\emph{th}}$ move, since she does no longer need the ``trap vertex" $u \in V(H)$.\\
If $\Delta(B[H]) > 1$ holds immediately after Blue’s $(n/4+2)^{\emph{nd}}$ move, then Red skips to Stage M. Otherwise, for every $n/4 + 3 \leq j \leq n/2-1$, in her $j^{\emph{th}}$ move Red claims an edge $e_j \in E(G[U])$ which is independent of her previously claimed edges while making sure that $D_j \leq 1$. Red then proceeds to Stage III.
\medskip

\textbf{Stage III:} Red completes her perfect matching in $E(H)$ by claiming at most 3 additional edges as follows:\\
Let $x,y \in V(H)$ be the two last vertices Red needs to connect to
build a perfect matching on $E(H)$. In her $(n/2)^{\emph{nd}}$ move,
Red claims $xy$ and finishes her perfect matching in $E(H)$. If this
is not possible, let $uv$ and $wz$ be two edges in $E(H)$ such that
$B[\lbrace u, v, w, z, x, y \rbrace]$ consists solely of the edge
$xy$. In her $(n/2)^{\emph{nd}}$ move Red then claims the edge $yu$.
In her $(n/2+1)^{\emph{st}}$ move, Red then claims the edge $xv$ and
thus finishes her perfect matching in $E(H)$. If this is not
possible, Red claims the edge $xz$. Since Blue cannot claim both
$wy$ and $wv$ in her next move, Red claims one of them in her
$(n/2+2)^{\emph{nd}}$ move and thus finishes her perfect matching in
$E(H)$ wasting at most two moves.
\medskip

\textbf{Stage M:} Let  $I_H := \lbrace v \in V(H) \ | d_R(V)=0 \rbrace$ be the set of isolated vertices of Red in $V(H)$. Note that $|I_H|=n/2-4$ is even. Playing on $E(G[I_H])$, Red follows the strategy $\mathcal{S}^{weak}_{n/2-4}$.
\bigskip

It remains to prove that Red can indeed follow all parts of the strategy.
\medskip

For Case 1 of Stage I note that Red uses the vertex $u$ as ``trap vertex", since Blue wastes a move by touching it again (because only one of the two (or more) edges incident to $u$ in Blue's graph are in the same maximum matching). Hence Red needs to ensure, as long as $D_j =0$, that the last edge she needs for her perfect matching is incident to $u$.
Furthermore, after Stage I, we have that $D_1 \leq 1$.
\medskip

The following lemma asserts that Red can follow Stage II of her strategy (either
for $n/4 +2$ or $n/2-1$ moves).

\begin{lemma} \label{lemma:distinct}
Let $H = K_n$ and let $G \supseteq H$ be a graph on $n' \geq n$ vertices. Assume that, when Red starts claiming edges, there exists a vertex $v \in V(H)$ with $d_B(v) \geq 1$, but Blue claimed at most one edge $xy$ in $E(H)$. Then Red can ensure that, for all $1 \leq j \leq n/2-1$, immediately after her $j^{\emph{th}}$ move, her graph is a matching consisting of $j$ edges and $D_j \leq 1$. Moreover, if $D_k = 1$ for any $1 \leq k \leq n/2-1$, then $D_j = 1$ for all $k \leq j \leq n/2-1$.
\end{lemma}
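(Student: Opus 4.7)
I would prove the lemma by induction on $j$, maintaining after Red's $j$-th move the two invariants that her graph is a matching of size $j$ and that $D_j \leq 1$, with $D_j = 1$ as soon as $D_k = 1$ holds for some $k \leq j$. The base case $j=1$ is exactly the content of Stage~I described above. The central observation driving the induction is the following: if $v \in U$ is unmatched by Red and not H-distinct, then $d_R^H(v) = 0$ (by being unmatched) and $d_B^H(v) = 0$ (by not being H-distinct), so $v$ is isolated in $H$ with respect to both players, and consequently any edge between two such vertices is automatically free.

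For the inductive step, let $F$ denote the set of vertices in $U$ unmatched by Red immediately before her $j$-th move; since $|U| \geq n-1$ we have $|F| \geq n-2j+1 \geq 3$ whenever $j \leq n/2-1$. Blue now plays one edge, which can turn at most two new vertices into H-distinct ones, so $D'_j \leq D_{j-1}+2 \leq 3$, and I would split into cases on $D'_j$. If $D'_j \leq 1$, let $c$ denote the unique H-distinct vertex (if any); since $|F \setminus \{c\}| \geq 2$, Red plays any edge inside $F \setminus \{c\}$, which is free by the central observation, and obtains $D_j = D'_j \leq 1$. If $D'_j = 2$ with H-distinct vertices $a,b$, then $|F \setminus \{a,b\}| \geq 1$; Red picks any $x \in F \setminus \{a,b\}$, which is non-H-distinct and hence has no $H$-edges at all, so $ax$ is free, and Red plays $ax$, killing $a$'s H-distinctness so that $D_j = 1$. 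Finally, if $D'_j = 3$ then $D_{j-1}=1$ (with old H-distinct vertex $c$) and Blue's most recent move must be an edge $ab$ for which $a$ and $b$ both satisfied $d_R^H=d_B^H=0$ immediately before the move, since otherwise only one of the endpoints could become newly H-distinct; in particular Blue has never previously claimed $ac$, so $ac$ is free, and Red plays $ac$, killing both $a$ and $c$ and leaving $D_j = 1$.

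In each case Red extends her matching by one independent edge and ensures $D_j \leq 1$, with $D_j = 1$ whenever $D_{j-1}=1$ or $D'_j \geq 1$, which preserves the monotonicity invariant. The only delicate point, and the main thing I would verify carefully, is the case $D'_j = 3$: one must exploit the structural restriction that both $a$ and $b$ became H-distinct in the same single Blue move in order to conclude that the edge $ac$ has never been touched and is therefore free — freeness is not immediate from the central observation since $c$ already had a Blue edge. A secondary technical wrinkle is compatibility with the ``trap vertex'' $u$ of Case~2 of Stage~I: while $D$ has not yet reached $1$, $U = V(H) \setminus \{u\}$ and $F$ is restricted accordingly, but the bound $|F| \geq n - 2j + 1 \geq 3$ still holds and the case analysis goes through verbatim; once $D_k = 1$ is attained, Red updates $U := V(H)$ during her $k$-th move and the remaining induction proceeds on the full vertex set without restriction.
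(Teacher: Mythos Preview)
Your proof is correct and follows essentially the same route as the paper's: induction on $j$ with a case split on $D'_j \in \{0,1,2,3\}$, the key step being that when $D'_j = 3$ the two newly H-distinct vertices $a,b$ had $d_B^H = 0$ just before Blue's move, so the edge from one of them to the old H-distinct vertex $c$ is necessarily free. Your write-up is slightly more explicit about why the needed edges are free (via your ``central observation'') and about the size bound $|F|\geq n-2j+1\geq 3$, but the logical structure and the case analysis match the paper's argument.
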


\begin{proof}
We prove the lemma by induction on $j$. Stage I ensures that $D_1 \leq 1$. Note that, since Blue can create at most two H-distinct vertices in one round, we have that $D'_{j+1} - D_j \leq 2$. We distinguish now two cases:

\begin{description}
\item[Case 1:] $D_j = 1$.\\
If $D'_{j+1} = 1$, let $u \in V(H)$ be the H-distinct vertex with $d^{H}_B(u) \geq 1$ and $d^{H}_R(u)=0$. Then Red claims any free edge $xy \in E(G[V(H) \setminus \lbrace u \rbrace])$ which is independent of all her previously claimed edges in $E(H)$ and hence $D_j=D_{j+1}=1$.\\
Else, if $D'_{j+1} = 2$, let $u \neq w \in V(H)$ be the two H-distinct vertices. Then Red claims an arbitrary free edge $wx$ with $x \neq u$ in $E(H)$ which is independent of all her previously claimed edges and hence $D_j=D_{j+1}=1$.\\
Else, if $D'_{j+1} = 3$, let $u \neq w \neq z \in V(H)$ be the three H-distinct vertices. Then w.l.o.g. let $wz$ be the edge Blue claimed in her last move. This means that $uw$ and $uz$ are still free, since there was only one H-distinct vertex before Blue's move. Hence Red claims one of them, thus ensuring that $D_j=D_{j+1}=1$.
\item[Case 2:] $D_j = 0$.\\
If $D'_{j+1} = 0$, then Red claims any free edge $uv \in E(G[U])$ (note that this is the only case where the auxiliary vertex-set $U$ is not equal to $V(H)$), which is independent of all her previously claimed edges in $E(H)$ and hence $D_j=D_{j+1}=0$.\\
Else, if $D'_{j+1} = 1$, let $u \in V(H)$ be the H-distinct vertex. Then Red claims an arbitrary free edge $xy \in E(G[V(H) \setminus \lbrace u \rbrace])$ which is independent of all her previously claimed edges in $E(H)$ and hence $D_{j+1}=1$.\\
Else, if $D'_{j+1} = 2$, let $u \neq w \in V(H)$ be the two H-distinct vertices. Then Red claims an arbitrary free edge $wx$ with $x \neq u$ in $E(H)$ which is independent of all her previously claimed edges and hence $D_{j+1}=1$.
\end{description}

Note that with keeping $D_j \leq 1$, Red ensures that there are always enough vertices $v \in V(H)$ with $d_B^H(v)=d_R^H(v)=0$. Therefore those independent edges which Red claims always exist.
\end{proof}

Lemma \ref{lemma:distinct} assures that Red can follow Stage II of the strategy.
\medskip

When Red reaches Stage III, Lemma \ref{lemma:distinct} ensures that Red's graph consists of a matching with $n/2 - 1$ edges. Note that $\Delta(B[H]) \leq n/4-3$, and therefore $d_B^H(x) \leq n/4 -3$. Moreover, remember that $d_B^H(y)=0$. Hence such two edges $uv$ and $wz$ will always exist in $E(H)$, since Blue cannot connect $x$ to more than $n/4-3$ edges of Red's matching.\\
Since $d_B(x) \geq 1$ before Red's $(n/2)^{\emph{th}}$ Move, Blue wastes a move by claiming $xy$ (and $xv$), again because only one of the two (or three) edges incident to $x$ in Blue's graph are in the same maximum matching. Thus, after Stage III, Red built a perfect matching in $E(H)$ in at most $n/2+2$ moves and, additionally, wasted at most as many moves as Blue.
\medskip

When Red reaches Stage M, $\Delta (B[H]) > 1$ and hence Blue wasted at least one move. Hence Red might waste one move too and therewith can play according to Maker's strategy on $E(G[I_H])$. Thus she builds her perfect matching in $E(H)$ in $n/2+1$ moves and, additionally, does not waste more moves than Blue.
\end{proof}

Henceforth, the strategy described in the proof above will be denoted by $\mathcal{S}^{a.strong}_n$.

\section{Proof of Theorem \ref{thm:perfect_matching}}

The main idea of Red's strategy is to build a perfect matching on $G
\sim G(n,p)$ ``quickly" while ensuring that she does not waste more
moves than Blue. To this end, Red partitions $V(G)$ into suitable,
cyclically ordered subgraphs as obtained by applying Lemma
\ref{partitioning_lemma} to $G$. In her strategy, Red then mostly
neglects all the edges in between those ``subboards" and plays on each
subboard seperately, trying to complete a perfect matching on each
board in a cyclically order. The crucial observation here is that
whenever Blue blocks the last edge Red needs for a perfect matching
on a subboard, using the fact that the union of two consecutive
subboards is a clique as well, Red can ``import" two vertices from
the next subboard to circumvent Blue's attack. Red is only being
interrupted in this ``subboard by subboard" approach if Blue tries
to block a vertex or claims too many edges on a subboard which Red
has not been playing on yet. Whenever a certain amount of edges in a
specific subboard or edges incident with the same vertex is reached
in Blue's graph, Red marks the relevant subboard as ``dangerous" and
gives this board a special attention.\\
Another thing Red needs to be carefull about is, that when she reaches a subboard to play on, it might not be `empty", since Blue cuold already have claimed some edges in it. But this is not really a problem, since any distinct vertex gives Red an advantage, because, usually, Blue may not touch distinct vertices again without wasting a move. Thus any disitnct vertex $v$ on an empty suboard can be used as a ``trap vertex" by Red if she ensures that the last edge of her perfect matching on this subboard will be incident to $v$.
\medskip

For the description of Red's strategy, we will use the following notation and definitions:\\
Assume that the graph $G \sim G(n,p)$ is partitioned according to Lemma \ref{partitioning_lemma}. Hence we have a partition $V(G)=V_{1}\cup\ldots\cup V_{t}$ into $t$ disjoint subsets. For all $1 \leq i \leq t$, let $E_i := E(G[V_i])$ be the subboards Red will play on. At any point during the game, let $R_i \subset R$ be the subgraph of Red's graph induced on $V_i$ and let $B_i \subset B$ be the subgraph of Blue's graph induced on $V_i$. Moreover, a subboard $E_i$ is called \emph{inactive} if $e(R_i) = 0$, it is called \emph{active} if $e(R_i) \geq 1$, but $R_i$ does not contain a perfect matching, and it is called \emph{safe} if $R_i$ contains a perfect matching.\\
Assume that a strong perfect matching game is in progress.  As defined before, a wasted move of Blue, respectively Red, is the claiming of an edge which does not increase the size of a maximum matching in $B$, respectively in $R$. However, since Red will mostly claim edges inside a
subboard $E_i$ during the game, and the tracking of wasted edges which lie between subboards, not all (wasted) moves of Blue do
concern Red. Thus, a \emph{wasted move of Blue on the
subboard $E_i$} is the act of claiming of an edge in $E_i$ which does not increase the size $M_{B_i}$ of a maximum matching in $B_i$. When we use the term \emph{wasted move} afterwards, we reffer to wasted moves in a specific subboard, unkess noted otherwise. Furthermore, we define the function
$w:\lbrace 1,...,t \rbrace \rightarrow  \lbrace 0, 1 \rbrace$ by $w(i) =0$ if $e(B_i)/2 - M_{B_i} = 0$ and $w(i)=1$ otherwise.\\
During the game, to prevent Blue from blocking a vertex or claiming
too many edges on an inactive subboard $E_i$, Red keeps track of $w(i)$. If a subboard $E_i$ is inactive and $w(i)$ turns 1, meaning that Blue wasted her first move on the subboard $E_i$, this subboard becomes \emph{dangerous}. From the moment on that a
subboard becomes dangerous, whenever Blue claims an edge in such a
board, Red skips her ``subboard by subboard" approach and answers on
the same board. Note that whenever an inactive subboard $E_i$
becomes dangerous, it stays dangerous until Red has completed a
perfect matching in $R_i$.
\medskip

\begin{proof}[Proof of Theorem \ref{thm:perfect_matching}]

First we propose a strategy for Red and then prove that the proposed
strategy is indeed a winning strategy for the perfect matching game
played on a typical $G\sim G(n,p)$. From now on, we condition on $G$ satisfying all the properties mentioned in the statements at Section \ref{sec:partition}. Hence let  $V(G)=V_{1}\cup\ldots\cup V_{t}$ be the partition of $G\sim G(n,p)$ into $t$ disjoint subsets as described in Lemma \ref{partitioning_lemma}. For all $1 \leq i \leq t$, let $E_i := E(G[V_i])$ and define $m_i := |V_i|$.
\medskip

The description of Red's strategy consists of two parts. In the first part which is called the \emph{overall strategy}, we describe how Red plays based on three ``substrategies" $\mathcal{S}_{dangerous}$, $\mathcal{S}_{trap}$ and $\mathcal{S}_{empty}$, which are given there as black boxes.
In the second part we describe each of these ``substrategies" formally.
Roughly speaking, all these ``substrategies" rely on $\mathcal{S}^{weak}_n$-- which ensures that Maker can win the weak perfect matching game on $K_n$ in at
most $n/2+1$ moves, and the strategy $\mathcal{S}^{a.strong}_n$ -- as described in the proof of Lemma \ref{lemma:strategy_strong}.
\medskip

In her Strategy, except of the first move, Red always reacts to Blue's moves.
Hence, we consider one \emph{round} as one move of Blue and a countermove of Red (except of round 0, which only consists of the first move of Red).
At any point during the game, if Red is unable to follow the proposed strategy or if Red claimed more than $n/2 + 4t$ edges (where $t$ is the number of subboards obtained by partitioning $V(G)$), then she forfeits the game. Red's strategy is divided into the following parts:
\bigskip

\noindent \textbf{Overall Strategy:}
As long as Red's graph does not contain a perfect matching, Red does the following:\\
In her first move of the game Red claims an edge $ab \in E_1$ obtained by following $\mathcal{S}_{empty}$ on $E_1$. In any other move,
let $e_j$ be the edge which has just been claimed by Blue in her $j^ {\emph{th}}$ move.
First, Red checks whether for some $i\leq t$ we have $e_j \in E_i$ and $E_i$ is dangerous. If this is the case, Red answers according to the strategy $\mathcal{S}_{dangerous}$ on $E_i$.
Otherwise, if there exists some active subboard, then Red chooses the smallest integer $i \leq t$ for which $E_i$ is active and plays on $E_i$ according to her chosen strategy $\mathcal{S}_i$ (it is described below how Red chooses the strategy $\mathcal{S}_i$ on each suboard $E_i$).
Otherwise, there are no active subboards and Red's graph contains a perfect matching on the subboards $E_1,...,E_{k-1}$ for some $k \in [t]$.
In this case Red wants to play on $E_k$ according to the strategy $\mathcal{S}_k$, which she chooses in the following way:
\begin{description}
\item[Case 1:] $2 \leq k \leq t-1$.\\
If $e(B_i) \geq 1$, then Red defines $\mathcal{S}_k = \mathcal{S}_{trap}$.\\
Else, $e(B_i) = 0$ and then Red defines $\mathcal{S}_k = \mathcal{S}_{empty}$.
\item[Case 2:] $k=t$.\\
If there exists a vertex $u \in V_k$ with $d_B(u) \geq 1$, then Red defines $\mathcal{S}_k = \mathcal{S}_{trap}$.\\
However, if $d_B(v) =0$ for all $v \in V_k$, we need to distinguish between the following two subcases:
\begin{description}
\item[Case 2.1:] Blue's graph does not contain a perfect matching of $G[V \setminus V_t]$. In this subcase
Red plays according to the Maker-Breaker strategy $\mathcal{S}^{weak}_{m_t}$ on $E_t$.
\item[Case 2.2:] Blue's graph contains a perfect matching of $G[V \setminus V_t]$. In this subcase
Red claims an arbitrary edge $ab \in E_t$ in her first move on $E_t$. If Blue claims an edge $xy \in E_t$ in her first move on $E_t$, Red skips to Stage II of $\mathcal{S}^{a.strong}_{m_t}$ and finishes her perfect matching accordingly.\\
Else, Red plays according to $\mathcal{S}^{weak}_{m_t-2}$ on $E(G[V_t \setminus\lbrace a,b \rbrace])$.
\end{description}
\end{description}
\bigskip

We now give a formal and detailed description of the ``substrategies" to be used by Red.\\
\noindent \textbf{\boldmath $\mathcal{S}_{dangerous}$:}
The strategy $\mathcal{S}_{dangerous}$ is used to play on subboards $E_i$ with $e(B_i) > 1$ and goes as follows:\\
If $B_i$ consists only of two incident edges, say $xy$ and $yz$, then Red claims $xz$, skips to Stage II of the strategy $\mathcal{S}^{a.strong}_{m_i}$ and finishes her perfect matching on $E_i$ according to $\mathcal{S}^{a.strong}_{m_i}$.\\
Else, Red partitions the vertex set $V_i=U_i \cup W_i$ into two subsets, such that
\begin{description}
\item[$(i)$] $|U_i|$ and $|W_i|$ are even if $|V_i|$ is even, and
\item[$(ii)$] $||U_i| - |W_i|| \leq 2$, and
\item[$(iii)$] Blue's graph contains at most one edge in $E(G[U_i]) \cup E(G[W_i])$
\end{description}
(We will prove bellow that this is indeed possible).\\
If there exists an edge in $E(G[U_i]) \cup E(G[W_i])$, we may assume
without loss of generality that it belongs to $E(G[U_i])$. In her first move on $E_i$, Red plays according to $\mathcal{S}^{a.strong}_{|U_i|}$ on $E(G[U_i])$. After this first move, let $u \in U_i$ be either the H-distinct vertex in $U_i$ or the trap vertex Red chose just now. Before her next move, Red then chooses a vertex $w \in W_i$ such that $uw \notin E(B)$ and fixes $w$ as the trap vertex to be chosen when playing according to $\mathcal{S}^{a.strong}_{|W_i|}$ on $E(G[W_i])$.\\
Then, as long as Red's graph does not contain a perfect matching on $E_i$, Red plays as follows:\\
If the last edge Blue claimed was in $E(G[U_i])$, Red plays according to $\mathcal{S}^{a.strong}_{|U_i|}$ on $E(G[U_i])$.\\
Else, if the last edge Blue claimed was in $E(G[W_i])$, Red plays according to $\mathcal{S}^{a.strong}_{|W_i|}$ on $E(G[W_i])$.\\
Else, if Red's graph does not contain a perfect matching on $E(G[U_i])$, Red plays according to $\mathcal{S}^{a.strong}_{|U_i|}$ on $E(G[U_i])$.\\
Else, Red plays according to $\mathcal{S}^{a.strong}_{|W_i|}$ on $E(G[W_i])$.

\bigskip
\noindent \textbf{\boldmath $\mathcal{S}_{trap}$:}
The strategy $\mathcal{S}_{trap}$ is used on subboards $E_i$ where $e(B_i) \geq 1$, respectively on the last subboard $E_t$ if  there exists a ``trap vertex" $u \in V_i$ with $d_B(u) \geq 1$. It can be divided into the following two cases:
\begin{description}
\item[Case 1:] $e(B_i) \leq 1$.\\
Then Red plays according to $\mathcal{S}^{a.strong}_{m_i}$ on $E_i$.
\item[Case 2:] $e(B_i) > 1$.\\
Then Red plays according to $\mathcal{S}_{dangerous}$ on $E_i$.
\end{description}

\bigskip
\noindent \textbf{ \boldmath $\mathcal{S}_{empty}$:}
The strategy $\mathcal{S}_{empty}$ is used on subboards $E_i$ where $e(B_i) = 0$. It consists of the following stages:
\medskip

\textbf{Stage I:} Red claims an arbitrary edge $ab \in E_i$, defines $U := V \setminus \lbrace a,b \rbrace$ and skips to Stage II.
\medskip

\textbf{Stage II:} Red follows Stage II (and Stage M, if needed) of the strategy $\mathcal{S}^{a.strong}_{m_i}$. Then Red skips to Stage III below.
\medskip

\textbf{Stage III:} Red completes her perfect matching on $E_i$ as follows:\\
Let $c,d \in V_i$ be the last two vertices Red needs to connect to finish her perfect matching on $V_i$. Red then claims $cd$. If this is not possible, then we distinguish between two cases:
\begin{description}\item[Case 1:] $e(R_{i+1}) \geq 1$. \\
Red plays as described in Stage III of the strategy $\mathcal{S}^{a.strong}_{m_i}$ and finishes her perfect matching in at most three moves.
\item[Case 2:] $e(R_{i+1}) = 0$.\\
Red chooses two vertices $p,q \in V_{i+1}$ such that
\begin{description}
\item[$(i)$] the edges $cp$ and $dq$ are still free and
\item[$(ii)$] $e(q,V_i) < m_i / 4$.
\end{description}
Then Red claims the edge $cp$. If this is not possible, Red uses Stage III of the strategy $\mathcal{S}^{a.strong}_{m_i}$ to finish her perfect matching on $V_i$ in at most three moves.\\
Before her next move, Red updates $V_i := V_i \cup \lbrace p,q \rbrace$ and $V_{i+1} := V_{i+1} \setminus \lbrace p,q \rbrace$. Then Red claims $dq$. If this is not possible, Red uses Stage III of the strategy $\mathcal{S}^{a.strong}_{m_i}$ to finish her perfect matching on the updated subset $V_i$ in at most three moves.
\end{description}
\bigskip
It remains to prove that Red can indeed follow all parts of the overall strategy as well as the three  ``substrategies",
and that this ensures her win in the strong game of building a perfect matching on $G$.

\bigskip
\textbf{ \boldmath $\mathcal{S}_{dangerous}$:}
Observe that when $B_i$ consists only of a path $xyz$ of length 2, Red can claim $xz$, hence ensuring that there is only 1 distinct vertex in $V_i$. Therefore he can immediately skip to the Strategy $\mathcal{S}^{a.strong}_{m_i}$.\\
For the partition of $V_i$ note that when $w(i)$ turns 1, $B_i$ consists of a path of length 2 or 3 and perhaps some single edges. Therefore Red can ``dissect" all edges of Blue and, by distributing free vertices equally, obtain a partition of $V_i$ into two equitable halves $U_i$ and $W_i$. If $|V_i|$ is not divisible by 4, Red might not be able to dissect all edges in order to keep $|U_i|$ and $|W_i|$ even. But then only one edge of Blue remains in $E(G[U_i]) \cup E(G[W_i])$, and hence the described partitioning of $V_i$ is possible.
Furthermore, this partiton ensures that $E(G[U_i])$ and $E(G[W_i])$ have the right properties to start the strategies $\mathcal{S}^{a.strong}_{|U_i|}$ and $\mathcal{S}^{a.strong}_{|W_i|}$. Hence Red builds a perfect matching on $E(G[U_i])$ at most $|U_i|/2+2$ and on $E(G[W_i])$ in at most $|W_i|/2+2$ moves. Thus, Red finishes her perfect matching on $E_i$ in at most $m_i/2+4$ moves.\\
Moreover, since the two``trap vertices" $u$ and $w$ are not adjacent, any move of Blue blocking the last edge Red needs to finish her perfect macthing on $U_i$ or $W_i$ is indeed a wasted move. Note that, if $u$ and $w$ were adjacent, this would not be true. Because then Blue could block the last edge on both $U_i$ and $W_i$, thereby creating a path of length 3 with the edge $uw$ in the middle. And this could only be one wasted move. Hence Red would waste one move more than Blue by circumventing the two blocked edges using Stage III of the strategy $\mathcal{S}^{a.strong}_{m_i}$. By ensuring that $u$ and $w$ are not adjacent, Red makes sure to not waste more moves than Blue in the building of the perfect matching on $V_i$. Therefore Red still has one spare move left to waste ($w(i) = 1$), which she might use to finish her perfect matching on $E_{i-1}$.\\
Furthermore, apart from the last subboard $E_t$, Red only considered edges in $E_i$ to detect wasted moves of Blue.

\bigskip
\textbf{ \boldmath $\mathcal{S}_{trap}$:}
If Blue claimed one or no edge in $E_i$, Red can immediately start the strategy $\mathcal{S}^{a.strong}_{m_i}$. Otherwise, the partition is needed to ensure that $E(G[U_i])$ and $E(G[W_i])$ have the right properties to start the strategies $\mathcal{S}^{a.strong}_{|U_i|}$ and $\mathcal{S}^{a.strong}_{|W_i|}$. Note that $w(i) = 0$, which means that $B_i$ consists of single edges only, and therefore the board $E_i$ can be partitioned accordingly. Moreover, if Red partitions $V_i$, Blue claimed at least two single edges in $E_i$ and hence Red can choose two trap vertices in $U_i$ and $W_i$ which are not adjacent. Thus, Red finishes her perfect matching on $E_i$ in at most $m_i/2+4$ moves. Moreover, Lemma \ref{lemma:strategy_strong} and $\mathcal{S}_{dangerous}$ ensure that Red does not waste more moves on $E_i$ than Blue.\\
Furthermore, apart from the last subboard $E_t$, Red only considered edges in $E_i$ to detect wasted moves of Blue.

\bigskip
\textbf{ \boldmath $\mathcal{S}_{empty}$:}
First of all, note that Red will not play according to $\mathcal{S}_{empty}$ on the last board $E_t$, hence the trick with importing two vertices from $V_{i+1}$ works if needed.\\
For Stage II note that $D_1= 0$, and hence the basic step of the induction in Lemma \ref{lemma:distinct} is satisfied. Thus Red can skip to Stage II of the strategy $\mathcal{S}^{a.strong}_{m_i}$.\\
For Stage III note that in Case 1, since Red's ''subboard by subboard" approach, $e(R_{i+1}) \geq 1$ means that $E_{i+1}$ is, respectively was, dangerous. Therewith, Blue wastes at least one move more on $E_{i+1}$ than Red.  Hence Red might waste this spare wasted move from the dangerous subboard on $E_i$. She needs it to ``start" with Stage III of $\mathcal{S}^{a.strong}_{m_i}$, since Blue may not waste a move by claiming the last edge $cd$. But with this spare wasted move Red ensures that, overall, she does not waste more moves than Blue.\\
For Case 2 note that, if such two vertices $p$ and $q$ would not exist, $c,d$ or $q$ would have a very high degree in Blue's graph (remember that all edges between $V_i$ and $V_{i+1}$ are present). Hence Red then could finish according to Stage III of $\mathcal{S}^{a.strong}_{m_i}$ and waste one (or two) moves.\\
Moreover, if Blue claims $cd$ and $dq$, she would waste a move, and then Red could finish according to Stage III of $\mathcal{S}^{a.strong}_{m_i}$. Note that here we need $e(q,V_i) < m_i / 4$ to ensure the existence of the set $\lbrace d=x,q=y,u,v,w,z \rbrace$ used in Stage III of $\mathcal{S}^{a.strong}_{m_i}$. Thus Red builds a perfect matching on $E_i$ in at most $m_i/2+2$ moves, and Lemma \ref{lemma:strategy_strong} ensures that Red does not waste more moves on $E_i$ than Blue.\\
Furthermore, Red only considered edges in $E_i$ to detect wasted moves of Blue.

\bigskip
\textbf{Overall Strategy:}
The overall strategy considers all possible cases since either Blue already claimed an edge in $E_i$ or not when Red needs to choose how to play on $E_i$.\\
In general, observe that the three substrategies work in such a way, that, on any subboard, Red wastes not more moves than Blue. Hence, if Red manages to build a perfect matching on $G$, then trivially she does it first and therefore wins the game.\\
Moreover, notice that, on the first $t-1$ subboards, Red never considered edges of Blue between subboards (after updating) to detect wasted moves. Hence on the last subboard, Red can already use $\mathcal{S}^{trap}$ if there exists a vertex $u in V_t$ with $d_B(u) \geq 1$ and use it as a trap vertex. No edge insident to $u$ was considered before to detect wasted moves and therefore claiming an edge incident to $u$ will be a ``new" wasted move of Blue.\\
For Case 2.1 note that Blue needs at least $m_t/2+1$ moves to finish her perfect matching on $G$. Hence, Red can play according to Maker's strategy $\mathcal{S}^{weak}_{m_t}$ on $E_t$, waste one additional move and still finish her perfect matching before Blue does.\\
For Case 2.2 note that if Blue's graph contains a perfect matching of $G[V \setminus V_t]$, then any edge not in $E_t$ which Blue claims can not increase the size of Blue's maximum matching and hence is a wasted move. Then Red can play according to Maker's strategy on $E(G[V_t \setminus \lbrace a,b \rbrace])$ and thus finish her perfect matching on $E_t$ in $m_t/2$ moves (since she already claimed the edge $ab \in E_t$).
Moreover, if Blue claims an edge $xy \in E_t$, then we are in exactly the same scenario as of Stage I of the strategy $\mathcal{S}^{a.strong}_{m_t}$ and hence Red skips
to Stage II of $\mathcal{S}^{a.strong}_{m_t}$ .
\medskip

All in all, since $m_i$ is even for all $1 \leq i \leq t-1$, and since Red can build a perfect matching on all subboards $E_i$ in at most $m_i / 2 + 4$ moves, Red builds a perfect matching on $G \sim G(n,p)$ in at most $n/2 + 4t$ moves. Furthermore, while playing according to the above strategy, Red does not waste more moves than Blue, and thus Red wins the strong perfect matching game on $G \sim G(n,p)$.
\end{proof}

\section{Concluding remarks}

In this paper we considered the perfect matching game played on the
edge set of a typical $G\sim G(n,p)$, for any fixed constant $p$.
Since a perfect matching appears in a typical $G\sim G(n,p)$ when
$p\geq\frac{\ln n+\omega(1)}{n}$ (see e.g.,
\cite{bollobas2001random}), it will be interesting to extend our
result for every $p$ in this regime. Clearly, our proof technique,
which is based on the existence of large cliques, can not work for
small $p$-s.

It might be very interesting to analyze other games as well. For
example, a natural game to analyze is the Hamiltonicity game. Using
similar arguments we indeed managed to provide Red with a winning
strategy for the Hamiltonicity game played on the edge set of a
typical $G\sim G(n,p)$ where $p$ is constant. However, the proof is
quite long and technical so we will only upload it to arXiv as a
draft for the curious reader.

{\bf Acknowledgment:} Most of this work has been taken in ETH, Zurich as part of a master thesis of the second author, while the first author was a postdoc there. We would like to thank ETH, Zurich for providing us with a perfect environment for doing research.
\newpage
\bibliographystyle{plain}
\bibliography{StrongGamesonGnpPaper}

\end{document}